\definecolor{myred}{RGB}{168,5,14}
\newtheorem{theorem}{Theorem}
\newenvironment{proof}[1][Proof]{\noindent\textbf{#1.} }{\ \rule{0.5em}{0.5em}}
\title{Asymptotic lower bound for the gap of Hermitian matrices having
  ergodic ground states and infinitesimal off-diagonal elements}
\shorttitle{Lower bound for the gap of Hermitian matrices with
  infinitesimal off-diagonal elements}
\author{M. Ostilli \inst{1,2} and C. Presilla \inst{2,3}}
\institute{ 
\inst{1}
Departamento de Fisica, Universidade Federal de Santa Catarina, Florianopolis, 88040-900, SC, Brazil \\
\inst{2}
Dipartimento di Fisica, Universit\`a di Roma ‘La Sapienza’, Piazzale Aldo Moro 2, Roma I-00185, Italy\\
\inst{3}
Istituto Nazionale di Fisica Nucleare, Sezione di Roma 1,
  Roma 00185, Italy
} 
\pacs{02.70.Hm}{Spectral methods}
\pacs{03.65.Yz}{Decoherence; open systems; quantum statistical methods}
\pacs{02.10.Yn}{Matrix theory}
\abstract{
  Given a $M\times M$ Hermitian matrix $\bm{\mathcal{H}}$ with
  possibly degenerate eigenvalues
  $\mathcal{E}_1<\mathcal{E}_2<\mathcal{E}_3< \dots$, we provide, in
  the limit $M\to\infty$, a lower bound for the gap
  $\mu_2=\mathcal{E}_2-\mathcal{E}_1$ assuming that (i) the
  eigenvector (eigenvectors) associated to $\mathcal{E}_1$ is ergodic
  (are all ergodic) and (ii) the off-diagonal terms of
  $\bm{\mathcal{H}}$ vanish for $M\to\infty$ more slowly than
  $M^{-2}$.  Under these hypotheses, we find
  $\varliminf_{M\to\infty}\mu_2\geq
  \varlimsup_{M\to\infty}\min_{n}\mathcal{H}_{n,n}$.  This general
  result turns out to be important for upper bounding the relaxation
  time of linear master equations characterized by a matrix equal, or
  isospectral, to $\bm{\mathcal{H}}$.  As an application, we consider
  symmetric random walks with infinitesimal jump rates and show that
  the relaxation time is upper bounded by the configurations (or
  nodes) with minimal degree.
}
\begin{document}


\maketitle

\section{Introduction}
In classical and quantum physics, as well as in applied sciences, many
systems can be effectively described by linear master
equations~\cite{Glauber, Gardiner, Kampen, WeissG, Petruccione,
  Weiss}. If the system is characterized by $M$ states that we can
label with an index $n\in\{1,\dots,M\}$, the master equation reads
\begin{align}
  \label{ME}
  &\frac{d \bm{p}(t)}{dt}=-\bm{\mathcal{L}}~\bm{p}(t),
\end{align}
where $\bm{p}(t)^{T}=\left(p_1(t),\dots,p_n(t)\right)$ is a row vector
in which each component $p_n(t)$ represents the probability that the
system is in the state $n$ at time $t$. The matrix $\bm{\mathcal{L}}$
(a weighted Laplacian) is a singular $M\times M$ real matrix having the
property $\sum_{m}\mathcal{L}_{m,n}=0$, $n\in\{1,\ldots,M\}$, which
stems from the conservation of the total probability $\sum_m
p_m(t)=1$, and where, for $m\neq n$, $-\mathcal{L}_{m,n}\geq 0$
represents the transition rate from state $n$ to state $m$.

In some cases $\bm{\mathcal{L}}$ is symmetric and, therefore, has real
distinct eigenvalues $\mu_1=0,\mu_2,\mu_3,\ldots$. Furthermore, if the
associated stochastic matrix
$\bm{\mathcal{S}}=\bm{1}-r^{-1}\bm{\mathcal{L}}$, where
$r=\max_n\mathcal{L}_{n,n}$, is irreducible, the Perron-Frobenious
theorem \cite{Weyl} implies that $\mu_1=0$ is the minimal eigenvalue,
it is simple (because $\mu_1$ is real), and $\mu_k>0$ for $k\geq 2$.
Finally, we have that $p^{(\mathrm{eq})}_m = 1/M$ is the $m$-th
component of the normalized eigenvector $\bm{p}^{(\mathrm{eq})}$
corresponding to the eigenvalue $\mu_1=0$.

If $\bm{\mathcal{L}}$ is asymmetric, we expect, in general, complex
eigenvalues.  However, if we assume that a detailed balance condition
holds, i.e. there exist $M$ positive values, $p^{(\mathrm{eq})}_m>0$,
such that $\mathcal{L}_{m,n} p^{(\mathrm{eq})}_n = \mathcal{L}_{n,m}
p^{(\mathrm{eq})}_m$, then $\bm{\mathcal{L}}\bm{p}^{(\mathrm{eq})}=0$ ($\bm{p}^{(\mathrm{eq})}$ is a right eigenvector), and
the spectrum of $\bm{\mathcal{L}}$ is still
real and non negative. In fact, $\bm{\mathcal{L}}$ is similar (and therefore isospectral) to the
real symmetric matrix $\bm{\mathcal{L}}_s = \bm{\mathcal{R}}^{-1}
\bm{\mathcal{L}} \bm{\mathcal{R}}$, where $\bm{\mathcal{R}}$ is the
diagonal matrix defined by the elements $\mathcal{R}_{m,n} =\delta_{m,n} ( p^{(\mathrm{eq})}_m )^{1/2}$.
Note that the condition $p^{(\mathrm{eq})}_m>0$ for any $m$, which ensures
the existence of the inverse $\bm{\mathcal{R}}^{-1}$, guarantees that
both $\mathcal{L}_{m,n}$ and $\mathcal{L}_{n,m}$ are 0 if one of the
two is so.

Whether $\bm{\mathcal{L}}$ is symmetric or not, but satisfies a
detailed balance condition and has an associated irreducible
stochastic matrix $\bm{\mathcal{S}}$, its spectrum consists of real
eigenvalues $0=\mu_1 <\mu_2 < \mu_3 < \dots$.  The eigenvalue
$\mu_1=0$ is simple and the corresponding eigenvector
$\bm{p}^{(\mathrm{eq})}$ is necessarily an ergodic ground state,
i.e. all its components are positive, and represents the unique
stationary state of Eq.~(\ref{ME}) toward which the system eventually
converges. For almost all initial conditions, up to terms
exponentially smaller, we have $\|\bm{p}(t)-\bm{p}^{(\mathrm{eq})} \|
\sim C \exp(-\mu_2 t)$, where $C$ is a constant.  In other words,
$\mu_2$, the minimal non zero eigenvalue of $\bm{\mathcal{L}}$,
provides the inverse of the relaxation time to
equilibrium. Determining $\mu_2$ is evidently of crucial
importance. In particular, the existence of a finite lower bound to
$\mu_2$ in the limit of $M\to\infty$ is pivotal in establishing if
$\bm{\mathcal{L}}$ represents a gapped system~\cite{CP-GW}.

Motivated by the above remarks, we consider a generic real symmetric or, more in general, 
Hermitian matrix $\bm{\mathcal{H}}$ and let $\mathcal{E}_1 < \mathcal{E}_2<
\mathcal{E}_3< \dots$ be its distinct, possibly degenerate,
eigenvalues.  We are interested in evaluating
$\mu_2=\mathcal{E}_2-\mathcal{E}_1$.  The matrix $\bm{\mathcal{H}}$
can be seen as a $M$-dimensional Hamiltonian operator.  In fact, we
can always split $\bm{\mathcal{H}}$ as
\begin{align}
  \label{LB}
  \bm{\mathcal{H}}=\bm{\mathcal{V}}+\bm{\mathcal{K}},
\end{align}
where, on the given canonical base, or base of the configurations,
$n\in\{1,\ldots,M\}$,
\begin{align}
  \label{LB1}
  {\mathcal{V}}_{m,n}=\delta_{m,n}{\mathcal{V}}_{n},
\end{align}
and
\begin{align}
  \label{LB2}
  {\mathcal{K}}_{m,n}=-\left(1-\delta_{m,n}\right)\sigma_{m,n},
\end{align}
the definitions of the vector ${\mathcal{V}}_{n}$ and matrix
$\sigma_{m,n}$ being implicit. Note that $\sigma_{m,n}$ is an Hermitian matrix with
$\sigma_{m,m}=0$ and its off-diagonal elements have non-defined signs or phases.
With such a
decomposition, $\bm{\mathcal{V}}$ and $\bm{\mathcal{K}}$ play the role
of ``potential'' and ``kinetic'' operators, respectively.  Suppose
that the lowest eigenstate $\mathcal{E}_1$ is $k$-fold degenerate,
with $k$ finite, and let
$|\mathcal{E}_1^{(1)}\rangle,\ldots,|\mathcal{E}_1^{(k)}\rangle$ be
the corresponding orthonormalized eigenstates.  We represent the
components of each ground state (GS) as
\begin{align}
  \label{LB3}
  \langle n|\mathcal{E}_1^{(i)}\rangle =\frac{u^{(i)}_n}{\sqrt{M}},
  \qquad i=1,\ldots,k.
\end{align}
In case $\mathcal{E}_1$ is simple, i.e. $k=1$, we will omit the
superscript $^{(i)}$. 

In this paper, we state and prove a general lower bound for the gap
$\mu_2=\mathcal{E}_2-\mathcal{E}_1$ valid for a large class of 
Hermitian matrices $\bm{\mathcal{H}}$ whose unique or multiple ground
states are all ergodic.  By an ergodic GS here we mean that, for any
$M$, each component of the GS is non zero, and
finite (apart from the normalization condition).  More precisely, we
say that the GS $|\mathcal{E}_1\rangle$ is ergodic if 
\begin{align}
  \label{LB4}
  0< \varliminf_{M\to\infty} |u_n|,\quad \forall n.
\end{align}
In other words, $|\mathcal{E}_1\rangle$ is ergodic if, as a function
of $M$
\begin{align}
  \label{LB5}
  \langle n|\mathcal{E}_1\rangle
  =\mathop{O}\left(\frac{1}{\sqrt{M}}\right),\quad \forall n.
\end{align}

In the next Section, we precisely state and prove the lower bound in the form
of a theorem valid for arbitrary finite degeneracy of $\mathcal{E}_1$.
Note that the index $k$ of this degeneracy is thought to be a constant independent of
the size $M$. Then we apply the result to symmetric random
walks characterized by infinitesimal jump rates, and show that the
relaxation time $\tau$ is upper bounded by the minimal degree of the
configurations (called nodes in graph theory).

\section{Lower bound for $\mu_2$}
We shall make use of the definitions (\ref{LB}-\ref{LB4}) previously
introduced.
\begin{theorem}
  (Non degenerate case) Let $\bm{\mathcal{H}}$ be a $M\times M$ 
  Hermitian matrix with distinct, possibly degenerate, eigenvalues
  $\mathcal{E}_1<\mathcal{E}_2<\mathcal{E}_3<\ldots$.  Let us suppose
  that the GS of $\bm{\mathcal{H}}$ is unique 
  and that there exists a
  positive function $g(M)$ such that 
  \begin{align}
    \label{T00}
    \sigma=\max_{m,n}\frac{|\sigma_{m,n}|}{|u_nu_m|}<g(M),
  \end{align}
  \begin{align}
    \label{T01}
    \lim_{M\to\infty} g(M)=0, \quad \lim_{M\to\infty}
    \frac{1}{M^2g(M)}=0.
  \end{align}
  Let the GS of $\bm{\mathcal{H}}$ be ergodic, and
  $\varlimsup_{M\to\infty} \mathcal{E}_1/M=0$.  Then, for the gap
  $\mu_2=\mathcal{E}_2-\mathcal{E}_1$, we have
  \begin{align}
    \label{T1}
    \varliminf_{M\to\infty} \mu_2 \geq \varlimsup_{M\to\infty}
    \min_{n} \mathcal{V}_n.
  \end{align}
\end{theorem}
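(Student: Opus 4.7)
By the Courant--Fischer variational principle,
\[
\mu_2 \;=\; \min \bigl\{\, \langle\psi|\bm{\mathcal{H}}|\psi\rangle-\mathcal{E}_1 \;:\; \langle\mathcal{E}_1|\psi\rangle=0,\ \|\psi\|=1 \,\bigr\},
\]
so the task is to bound the right--hand side from below, uniformly in~$\psi$. Splitting $\bm{\mathcal{H}} = \bm{\mathcal{V}} + \bm{\mathcal{K}}$ and using the trivial diagonal bound $\langle\psi|\bm{\mathcal{V}}|\psi\rangle = \sum_n \mathcal{V}_n|\psi_n|^2 \geq \min_n \mathcal{V}_n$, the theorem will follow once one proves that
\[
\langle\psi|\bm{\mathcal{K}}|\psi\rangle \;\geq\; \mathcal{E}_1 - \varepsilon_M,\qquad \varepsilon_M\to 0,
\]
uniformly in the admissible~$\psi$.

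Because the ergodicity condition~(\ref{LB4}) keeps $|u_n|$ bounded away from~$0$, I would next pass to the ground--state change of variables $\psi_n = u_n f_n$; the orthogonality condition becomes $\sum_n |u_n|^2 f_n = 0$ and the normalization $\sum_n |u_n|^2 |f_n|^2 = 1$. Substituting into $\langle\psi|\bm{\mathcal{H}}|\psi\rangle$ and using the ground--state eigenvalue equation $\sum_{m\neq n}\sigma_{n,m}u_m = (\mathcal{V}_n-\mathcal{E}_1)u_n$ to eliminate $\mathcal{V}_n$, a standard $m\leftrightarrow n$ symmetrization yields the Dirichlet--like identity
\[
\langle\psi|\bm{\mathcal{H}}|\psi\rangle - \mathcal{E}_1 \;=\; \tfrac{1}{2}\sum_{m\neq n} \sigma_{m,n}\, u_m u_n\, |f_m-f_n|^2 ,
\]
which recasts $\mu_2$ as the minimum of a quadratic form whose coefficients are controlled pointwise by~(\ref{T00}).

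The analytic heart of the argument is the estimate of this Dirichlet form. Using $|\sigma_{m,n}|\leq g(M)|u_m u_n|$ together with the identity $\sum_{m,n}|u_m|^2|u_n|^2|f_m-f_n|^2 = 2M$ (which follows from expanding the square and feeding in both constraints on~$f$), a naive modulus bound gives only $|\langle\psi|\bm{\mathcal{H}}|\psi\rangle-\mathcal{E}_1|\leq Mg(M)$, which is too weak. The refinement required is to write $\tau_{m,n}:=\sigma_{m,n}/(u_m u_n)$ with $|\tau_{m,n}|<g(M)$ and combine the Dirichlet bound with the diagonal variational bound: the ground--state eigenvalue equation forces $\mathcal{V}_n-\mathcal{E}_1\leq Mg(M)$, so the $\min_n\mathcal{V}_n$ appearing in the diagonal estimate is itself of order $\mathcal{E}_1+Mg(M)$, and the dangerous $Mg(M)$--size pieces arising in the Dirichlet form are precisely those that must match this excess. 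Closing the argument then requires the hypotheses $g(M)\to 0$, $1/(M^2g(M))\to 0$, and $\mathcal{E}_1/M\to 0$ to cooperate so that the residual term is $o(1)$.

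The main obstacle is exactly this delicate quantitative cancellation. Both the potentially dangerous contribution of the Dirichlet form and the excess $\min_n\mathcal{V}_n-\mathcal{E}_1$ scale as $Mg(M)$, which is not forced to~$0$ by the hypotheses; one must therefore track them at the same order and verify that they compensate to leading order, leaving behind a remainder~$\varepsilon_M$ that does vanish. The bookkeeping needed to make this compensation uniform in~$\psi$---in particular the extraction of a subleading correction of size $1/\sqrt{M^2 g(M)}$ coming from the fluctuations of~$\tau_{m,n}$---is, in my view, the most technically delicate step of the proof.
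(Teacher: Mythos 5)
Your setup is sound as far as it goes: the Courant--Fischer characterization of $\mu_2$, the ground-state substitution $\psi_n=u_nf_n$, the Dirichlet-type identity, and the identity $\sum_{m,n}|u_m|^2|u_n|^2|f_m-f_n|^2=2M$ are all correct (modulo phases, since $\sigma_{m,n}$ and $u_n$ may be complex, so the form should involve $\mathrm{Re}\left[\sigma_{m,n}\overline{u_m}u_n(\overline{f_m}-\overline{f_n})(f_m-f_n)\right]$). But the proposal stops exactly where the theorem begins. What has to be proved is that the \emph{indefinite} form $\tfrac12\sum_{m\neq n}\sigma_{m,n}u_mu_n|f_m-f_n|^2$ is bounded below, uniformly over all $f$ with $\sum_n|u_n|^2f_n=0$ and $\sum_n|u_n|^2|f_n|^2=1$, by $\min_n\mathcal{V}_n-\mathcal{E}_1-o(1)$; equivalently, that $\langle\psi|\bm{\mathcal{K}}|\psi\rangle\geq\mathcal{E}_1-o(1)$ uniformly on the orthogonal complement of the GS. You correctly note that the naive modulus bound only gives $-Mg(M)$, which diverges in the motivating case $g\sim 1/\log M$, and that $\min_n\mathcal{V}_n-\mathcal{E}_1$ is itself only controlled at order $Mg(M)$; you then assert that these two $O(Mg(M))$ quantities ``must'' compensate, leaving a residual of size $1/\sqrt{M^2g(M)}$, but you give no mechanism for the cancellation, no estimate showing how the single linear constraint $\sum_n|u_n|^2f_n=0$ prevents $f$ from aligning with the most negative directions of the form (Cauchy interlacing alone only lifts the constrained minimum of $\bm{\mathcal{K}}$ to at most its second eigenvalue, which can still be extensive), and no actual use of the hypothesis $1/(M^2g(M))\to0$ beyond quoting it. Since this cancellation \emph{is} the content of the statement, what you have is a framework with the central estimate missing, not a proof.

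For contrast, the paper never attempts a uniform-in-$\psi$ bound on $\langle\psi|\bm{\mathcal{K}}|\psi\rangle$. It adds a rank-one penalty $\bm{\mathcal{F}}(\lambda)=\bm{\mathcal{H}}+\lambda|\mathcal{E}_1\rangle\langle\mathcal{E}_1|$, uses $GSL[\bm{\mathcal{F}}(\lambda)]=\min(\lambda,\mu_2)$ (Eqs.~(\ref{T4})--(\ref{T5})), rewrites $\bm{\mathcal{F}}=\bm{\mathcal{V}}+\frac{\lambda}{M}\bm{\mathcal{D}}+\left[\bm{\mathcal{K}}+\frac{\lambda}{M}\bm{\mathcal{U}}\right]$ and applies Weyl's inequality (Eq.~(\ref{TW})); the key lemma is the jump of $GSL[\alpha\bm{\mathcal{U}}]$ from extensive to intensive at $\alpha=0$ (Eq.~(\ref{U2})), and the scaling $\lambda(M)/M=\sqrt{g(M)}$ (Eq.~(\ref{T10})) is chosen so that simultaneously $\lambda\to\infty$ and $\sigma\ll\lambda/M\to0$, which is argued to drive $GSL\left[\bm{\mathcal{K}}+\frac{\lambda}{M}\bm{\mathcal{U}}\right]\to0$. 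In fairness, the step you could not close corresponds to the step the paper itself treats only at the level of the asymptotic relation (\ref{T7}) (stated with ``$\sim$'' and an unspecified $\sigma^*$), so you have isolated the genuinely delicate point of the argument; but isolating it is not proving it, and as written your proposal does not establish the bound.
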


The same result holds essentially unchanged if $\mathcal{E}_1$ is
$k$-fold degenerate, with $k$ finite and independent of $M$, and each
one corresponding GS is ergodic.
\begin{theorem}
  (Finite degenerate case) Let $\bm{\mathcal{H}}$ be a $M\times M$
  Hermitian matrix with distinct, possibly degenerate,
  eigenvalues $\mathcal{E}_1<\mathcal{E}_2<\mathcal{E}_3<\ldots$.  Let
  us suppose that we have $k$ degenerate GSs 
  and there exists a positive function $g(M)$ such that
  \begin{align}
    \label{T00g}
    \sigma=\max_i
    \max_{m,n}\frac{|\sigma_{m,n}|}{|u^{(i)}_nu^{(i)}_m|}<g(M),
  \end{align}
  \begin{align}
    \label{T01g}
    \lim_{M\to\infty} g(M)=0, \quad \lim_{M\to\infty}
    \frac{1}{M^2g(M)}=0.
  \end{align}
  Let each GS of $\bm{\mathcal{H}}$ be ergodic, and
  $\varlimsup_{M\to\infty} \mathcal{E}_1/M=0$.  Then, for the gap
  $\mu_2=\mathcal{E}_2-\mathcal{E}_1$, we have
  \begin{align}
    \label{T1g}
    \varliminf_{M\to\infty} \mu_2 \geq \varlimsup_{M\to\infty}
    \min_{n} \mathcal{V}_n.
  \end{align}
\end{theorem}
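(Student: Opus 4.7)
The plan is to invoke the Courant--Fischer min-max characterization,
\begin{align}
\mathcal{E}_2 = \min_{|\psi\rangle \perp \{|\mathcal{E}_1^{(i)}\rangle\}_{i=1}^{k}} \frac{\langle\psi|\bm{\mathcal{H}}|\psi\rangle}{\langle\psi|\psi\rangle},
\end{align}
and to lower bound the right-hand side for every admissible test vector $|\psi\rangle=\sum_n c_n|n\rangle$. A preliminary unitary gauge transformation $|n\rangle\mapsto e^{i\theta_n}|n\rangle$, which alters neither the spectrum nor $|u_n^{(i)}|$, reduces to the case where a distinguished ground state, say $|\mathcal{E}_1^{(1)}\rangle$, has real positive components $u_n^{(1)}>0$.

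The central algebraic step is to introduce the rescaled coefficients $f_n:=c_n/u_n^{(1)}$ and to combine the componentwise ground-state equation
\begin{align}
  \sum_{m}\sigma_{n,m}\,u_m^{(1)}=(\mathcal{V}_n-\mathcal{E}_1)\,u_n^{(1)},
\end{align}
which follows from $\bm{\mathcal{H}}|\mathcal{E}_1^{(1)}\rangle=\mathcal{E}_1|\mathcal{E}_1^{(1)}\rangle$, with the orthogonality constraint $\sum_n (u_n^{(1)})^{2} f_n=0$. Introducing the Hermitian weights $A_{m,n}:=\sigma_{m,n}\,u_m^{(1)}u_n^{(1)}$, a direct calculation reorganizes the numerator into a Dirichlet-like quadratic form
\begin{align}
\langle\psi|\bm{\mathcal{H}}|\psi\rangle-\mathcal{E}_1\|\psi\|^2
  =\frac{1}{2}\sum_{m,n} A_{m,n}\,|f_m-f_n|^2,
\end{align}
and the hypothesis (\ref{T00g}) together with ergodicity yields the uniform bound $|A_{m,n}|\leq g(M)(u_m^{(1)}u_n^{(1)})^2$.

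The crux of the argument is to lower bound this Dirichlet form by $\min_n\mathcal{V}_n\,\|\psi\|^2$ up to a vanishing correction. Expanding $|f_m-f_n|^2$ and using the row-sum identity $\sum_m A_{m,n}=(u_n^{(1)})^{2}(\mathcal{V}_n-\mathcal{E}_1)$, the form splits into a ``potential'' piece at least $(\min_n\mathcal{V}_n-\mathcal{E}_1)\|\psi\|^2$ and a cross-term $\sum_{m,n}A_{m,n}f_m^* f_n$. The latter is handled by Cauchy--Schwarz, with the orthogonality $\sum_n(u_n^{(1)})^2 f_n=0$ exploited to annihilate the dominant rank-one contribution; the surviving residue is of order $M g(M)\,\|\psi\|^2$, and the hypothesis $\lim_{M\to\infty}1/(M^2 g(M))=0$, in tandem with $\varlimsup_{M\to\infty}\mathcal{E}_1/M=0$, is used to make both this residue and the $\mathcal{E}_1$-dependent correction subleading with respect to $\min_n\mathcal{V}_n$ in the limit.

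For the finite-degenerate case $k>1$ the identity above uses only $|\mathcal{E}_1^{(1)}\rangle$, while the orthogonality of $|\psi\rangle$ to the remaining $|\mathcal{E}_1^{(i)}\rangle$ restricts $f_n$ further; since $k$ is finite and independent of $M$ and the bound (\ref{T00g}) is uniform over $i$, these extra constraints cost only a bounded multiplicative factor. The main obstacle, in my view, is that $\sigma_{m,n}$ can be complex with arbitrary phase, so the Dirichlet-like form is not manifestly non-negative, and the $-\mathcal{E}_1$ term produced by the row-sum identity cannot simply be discarded; extracting the $\min_n\mathcal{V}_n$ bound depends essentially on an algebraic cancellation driven jointly by the ground-state equation and the orthogonality to the ground-state subspace, rather than on any manifest positivity of the off-diagonal weights.
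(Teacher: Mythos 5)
Your route --- Courant--Fischer plus a ground-state (Jacobi) transformation --- is genuinely different from the paper's, which perturbs $\bm{\mathcal{H}}$ by $\lambda|\mathcal{E}_1\rangle\langle\mathcal{E}_1|$, applies Weyl's inequality, and exploits the extensive/intensive dichotomy (\ref{U2g}) with the scaling $\lambda(M)/M=\sqrt{g(M)}$; that scaling is the only place where $1/(M^2g(M))\to0$ enters, namely to guarantee $\lambda(M)\to\infty$. The decisive gap in your sketch is the cross-term estimate. Normalizing $\mathcal{E}_1=0$ (note $\mathcal{E}_1$ cancels identically in the Rayleigh quotient of $\bm{\mathcal{H}}-\mathcal{E}_1\bm{I}$, so no $\mathcal{E}_1/M$ condition is relevant at this stage), what you must show is $-\sum_{m\neq n}\sigma_{m,n}\bar c_m c_n\geq -o(1)\,\|\psi\|^2$ for every $\psi$ orthogonal to the GS subspace. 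But (\ref{T00g}) is only an entrywise bound $|\sigma_{m,n}|\leq g(M)|u^{(1)}_m u^{(1)}_n|$: it gives $\bm{\sigma}$ no rank-one structure aligned with the ground state, so orthogonality to $|\mathcal{E}^{(1)}_1\rangle$ does not ``annihilate the dominant contribution'', and Cauchy--Schwarz yields only $|\sum_{m\neq n}\sigma_{m,n}\bar c_m c_n|\leq g(M)\bigl(\sum_n|u^{(1)}_n c_n|\bigr)^2\leq Mg(M)\|\psi\|^2$. The residue $Mg(M)\|\psi\|^2$ that you call subleading is not: $Mg(M)$ diverges precisely in the motivating regime $g\sim1/\log M$, and $1/(M^2g)\to0$ is a \emph{lower} bound on $g$, so it cannot tame $Mg(M)$. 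Worse, the componentwise GS equation combined with (\ref{T00g}) gives $|\mathcal{V}_n-\mathcal{E}_1|\leq Mg(M)$ for every $n$, so an error of that order is of the same size as the whole quantity $\min_n\mathcal{V}_n-\mathcal{E}_1$ you are trying to isolate.

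This is not a matter of sharpening constants. Take $u^{(1)}_n\equiv1$, split the indices into two halves, set $\sigma_{m,n}=g(M)/2$ inside each half and $\sigma_{m,n}=\epsilon\ll g(M)$ across, and choose $\mathcal{V}_n=\sum_{m\neq n}\sigma_{m,n}$, so that the uniform vector is the exact, unique, ergodic GS with $\mathcal{E}_1=0$ and (\ref{T00g})--(\ref{T01g}) hold with $g\sim1/\log M$. For the admissible test vector $c_n=\pm1$ (antisymmetric between the halves), the potential piece and the cross term are both of order $Mg(M)\|\psi\|^2$ with opposite signs and cancel almost exactly, leaving a Rayleigh quotient equal to $\epsilon M$, which can be made arbitrarily small while $\min_n\mathcal{V}_n$ is of order $Mg(M)$. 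Hence no chain of the form ``potential piece $\geq(\min_n\mathcal{V}_n-\mathcal{E}_1)\|\psi\|^2$ plus a cross term of size $o(1)\|\psi\|^2$'' can hold for all admissible $\psi$ under the stated hypotheses; and since Courant--Fischer makes your per-vector bound equivalent to (\ref{T1g}) itself, this configuration shows that input beyond (\ref{T00g})--(\ref{T01g}) and ergodicity is indispensable --- it is exactly what the paper compresses into the asserted asymptotics (\ref{T7g}) and the claimed vanishing of $GSL[\bm{\mathcal{K}}+\tfrac{\lambda}{M}\bm{\mathcal{U}}]$, a step you would be entitled to scrutinize for the same reason. Two smaller points: your Dirichlet-form identity is valid only for real $\sigma_{m,n}$ (the gauge makes $u^{(1)}$ positive but cannot make $\bm{\sigma}$ real, so a genuinely complex Hermitian $\bm{\sigma}$ leaves an extra term), and the degenerate case (``bounded multiplicative factor'') is asserted rather than argued; but the cross-term estimate is the essential gap.
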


\begin{proof}
  We shall make use of the following short-hand notation: given a
  Hermitian operator $\bm{C}$, $GSL[\bm{C}]$ stands for the GS level
  of $\bm{C}$ (i.e., the minimal eigenvalue of $\bm{C}$).

  Let us first suppose that the GS of $\bm{\mathcal{H}}$ is unique and
  that $\mathcal{E}_1=0$ (so that $\mu_2=\mathcal{E}_2$).  Let us
  introduce the following new Hamiltonian
  \begin{align}
    \label{T2}
    \bm{\mathcal{F}}(\lambda)=
    \bm{\mathcal{H}}+\lambda|\mathcal{E}_1\rangle\langle\mathcal{E}_1|.
  \end{align}
  We have
  \begin{align}
    \label{T4}
    GSL\left[\bm{\mathcal{F}}(\lambda)\right]=\left\{
      \begin{array}{ll}
        \lambda, &\lambda<\mu_2,\\
        \mu_2 &\lambda\geq \mu_2.
      \end{array}
    \right.
  \end{align}
  Equation~(\ref{T4}) in particular implies that (the limit exists due
  to the monotonicity)
  \begin{align}
    \label{T5}
    \mu_2=\lim_{\lambda\to\infty}GSL\left[\bm{\mathcal{F}}(\lambda)\right].
  \end{align}

  \textit{Uniform ergodic state.}  For the moment being, let us
  suppose that $\langle n|\mathcal{E}_1\rangle=1/\sqrt{M}$ (such a
  situation occurs, for example, when we are considering a random
  walk, where $\mathcal{V}_n=-\sum_{m}\mathcal{K}_{m,n}$).  In this
  case, we can rewrite Eq. (\ref{T2}) as
  \begin{align}
    \label{T2b}
    \bm{\mathcal{F}}(\lambda)=
    \bm{\mathcal{V}}+\frac{\lambda}{M}\bm{I}+\bm{\mathcal{K}}+
    \frac{\lambda}{M}\bm{\mathcal{U}},
  \end{align}
  where we have introduced
  \begin{align}
    \label{U}
    \bm{\mathcal{U}}=M|\mathcal{E}_1\rangle\langle\mathcal{E}_1|-\bm{I},
  \end{align}
  and $\bm{I}$ is the identity matrix.  Note that, in the canonical
  base, we have $\mathcal{U}_{n,n}=0$ and $\mathcal{U}_{m,n}=1$ for
  $m\neq n$.  Furthermore, from
  \begin{subequations}
    \begin{align}
      \label{U1}
      &\bm{\mathcal{U}}|\mathcal{E}_1\rangle =
      (M-1)|\mathcal{E}_1\rangle,
      \\
      &\bm{\mathcal{U}}|\mathcal{E}_i\rangle = -|\mathcal{E}_i\rangle,
      \qquad i \neq 1,
    \end{align}
  \end{subequations}
  it follows that for any $\alpha\in \mathbb{R}$
  \begin{align}
    \label{U2}
    GSL\left[\alpha \bm{\mathcal{U}}\right]=\left\{
      \begin{array}{ll}
        -|\alpha|(M-1), &\alpha< 0, \\
        -|\alpha|, &\alpha\geq 0. 
      \end{array}
    \right.
  \end{align}
  Equation~(\ref{U2}), despite its simplicity, is the key of our
  proof: when $\alpha$ changes from a negative to a positive value,
  the GS level of $\alpha\bm{\mathcal{U}}$ changes from being
  extensive, i.e. of order $\mathop{O}(M)$, to being intensive,
  i.e. of order $\mathop{O}(1)$.  From the first Weyl's inequality
  \cite{Weyl}, we have
  \begin{align}
    \label{TW}
    GSL\left[\bm{\mathcal{F}}(\lambda)\right]\geq \min_n
    \mathcal{V}_n+\frac{\lambda}{M}+f\left(\frac{\lambda}{M},\bm{\sigma}\right),
  \end{align}
  where
  \begin{align}
    \label{T6}
    f\left(\frac{\lambda}{M},\bm{\sigma}\right)=GSL\left[\bm{\mathcal{K}}
      +\frac{\lambda}{M}\bm{\mathcal{U}}\right].
  \end{align}
  We are not able to exactly calculate
  $f\left(\frac{\lambda}{M},\bm{\sigma}\right)$, however, from
  Eq.~(\ref{U2}) we see that
  \begin{align}
    \label{T7}
    f\left(\frac{\lambda}{M},\bm{\sigma}\right)\sim \left\{
      \begin{array}{ll}
        -(M-1)\frac{\lambda}{M}+GSL\left[\bm{\mathcal{K}}\right] 
        &\frac{\lambda}{M} \ll \sigma, \\
        -\left(\frac{\lambda}{M}-\sigma^*\right), 
        &\frac{\lambda}{M} \gg \sigma, 
      \end{array}
    \right.
  \end{align}
  where $\sigma$ has been defined in Eq.~(\ref{T00}), and $\sigma^*\in\mathbb{R}$
  is some appropriate value of the order of magnitude of
  $2\sum_{m,n}\sigma_{m,n}/(M(M-1))$ (which is real).  We stress that we do not need
  to know $\sigma^*$, nor to assume Eq.~(\ref{T7}) as an actual
  equality.  However, Eq.~(\ref{T7}) makes clear that there exists a
  threshold in $\lambda/M$, which is of order $\sigma$, where there
  occurs a sort of phase transition, the GS of
  $\bm{\mathcal{K}}+\bm{\mathcal{U}}{\lambda}/{M}$ transiting from
  being extensive to being intensive. In the latter phase, we see that
  there exists a regime where $\lambda/M$, $\sigma$ and $\sigma^*$
  tend all to zero.  In fact, let us choose $\lambda=\lambda(M)$ such
  that
  \begin{align}
    \label{T10}
    \frac{\lambda(M)}{M}=\sqrt{g(M)}.
  \end{align}
  With this choice and due to Eqs.~(\ref{T00}), we have that the
  following limits are simultaneously satisfied
  \begin{align}
    \label{T11d}
    & \lim_{M\to\infty}\sigma=\lim_{M\to\infty}\sigma^*=0,\\
    \label{T11a}
    & \lim_{M\to\infty}\frac{\lambda(M)}{M}=0,\\
    \label{T11b}
    & \lim_{M\to\infty}\frac{\sigma}{\frac{\lambda(M)}{M}}=0,\\
    \label{T11c}
    & \lim_{M\to\infty}\lambda(M)=+\infty.
  \end{align}
  Equations~(\ref{T11d}-\ref{T11b}) plugged into Eq.~(\ref{T6})
  provide
  \begin{align}
    \label{T6}
    \lim_{M\to\infty}
    f\left(\frac{\lambda(M)}{M},\bm{\sigma}\right)=0,
  \end{align}
  whereas, by using Eq.~(\ref{T11c}) in Eq.~(\ref{TW}) we have
  \begin{align}
    \label{TWl}
    \varliminf_{M\to \infty}
    GSL\left[\bm{\mathcal{F}}(\lambda(M))\right]\geq \varlimsup_{M\to
      \infty} \min_n \mathcal{V}_n.
  \end{align}
  Finally, by using Eq.~(\ref{T5}) the proof of the theorem in the
  case of a uniform ergodic state is complete.

  \textit{General ergodic state.}  Now we still consider a unique (and
  ergodic) GS with $\mathcal{E}_1=0$, but we have $\langle
  n|\mathcal{E}_1\rangle=u_n/\sqrt{M}$ with $u_n\neq 1$.  Little
  changes are necessary to generalize the previous proof to the
  present case.  We define
  \begin{align}
    \label{T2bg}
    \bm{\mathcal{F}}(\lambda)=
    \bm{\mathcal{V}}+\frac{\lambda}{M}\bm{\mathcal{D}}
    +\bm{\mathcal{K}}+\frac{\lambda}{M}\bm{\mathcal{U}},
  \end{align}
  where
  \begin{align}
    \label{Ug}
    \bm{\mathcal{U}}=
    M|\mathcal{E}_1\rangle\langle\mathcal{E}_1|-\bm{\mathcal{D}},
  \end{align}
  and $\bm{\mathcal{D}}$ is a diagonal matrix with elements
  \begin{align}
    \label{D}
    {\mathcal{D}}_{m,n}=|u_n|^2\delta_{m,n}.
  \end{align}
  In the canonical base we have $\mathcal{U}_{n,n}=0$ and
  $\mathcal{U}_{m,n}=\overline{u_m}u_n$ for $m\neq n$.  Furthermore, from
  \begin{subequations}
    \label{U1g}
    \begin{align}
      &\langle n|\bm{\mathcal{U}}|\mathcal{E}_1\rangle =
      (M-|u_n|^2)\langle n|\mathcal{E}_1\rangle,
      \\
      &\langle n|\bm{\mathcal{U}}|\mathcal{E}_i\rangle = -|u_n|^2\langle
      n|\mathcal{E}_i\rangle, \qquad i\neq 1,
    \end{align}
  \end{subequations}
follows that, for any $\alpha\in \mathbb{R}$, we have
  \begin{align}
    \label{U2g}
    GSL\left[\alpha\bm{\mathcal{U}}\right] \sim \left\{
      \begin{array}{ll}
        -|\alpha|(M-s^*(M)), &\alpha< 0, \\
        -|\alpha|s^*(M), &\alpha\geq 0,
      \end{array}
    \right.
  \end{align}
  where $s^*(M)$ is an appropriate value of the order of magnitude
  of $s(M)$: 
  \begin{eqnarray}
    \label{umax}
    s(M)=\max_n |u_n|^2.  
  \end{eqnarray}
Eq. (\ref{U2g}) can be verified rigorously by using the 'Matrix Determinant Lemma' \cite{Weyl}.
More precisely, $s^*$ is the smallest root of the Equation in $\mu$: $\sum_n |u_n|^2/(\mu+|u_n|^2)=1$.
Notice that due to the normalization of $|\mathcal{E}_1\rangle$ we have $\sum_n |u_n|^2=M$ which, combined with the ergodicity condition (\ref{LB4}) implies
  \begin{eqnarray}
    \label{umax1}
    \varlimsup_{M\to \infty} s(M)<\infty \quad \mathrm{and} \quad \varlimsup_{M\to \infty} s^*(M)<\infty.  
  \end{eqnarray}

From the first Weyl's inequality, we have
  \begin{align}
    \label{TWg}
    GSL\left[\bm{\mathcal{F}}(\lambda)\right]\geq \min_n
    \left(\mathcal{V}_n+\frac{\lambda}{M}|u_n|^2\right)
    +f\left(\frac{\lambda}{M},\bm{\sigma}\right),
  \end{align}
  where
  \begin{align}
    \label{T6g}
    f\left(\frac{\lambda}{M},\bm{\sigma}\right)=
    GSL\left[\bm{\mathcal{K}}+\frac{\lambda}{M}\bm{\mathcal{U}}\right].
  \end{align}

  From Eq. (\ref{U2g})we have
  \begin{align}
    \label{T7g}
    f\left(\frac{\lambda }{M},\bm{\sigma}\right)\sim \left\{
      \begin{array}{ll}
        -(M-1)\frac{\lambda s^*(M)}{M}+GSL\left[\bm{\mathcal{K}}\right] 
        &\frac{\lambda}{M} \ll \sigma, \\
        -\left(\frac{\lambda s^*(M)}{M}-\sigma^*\right), 
        &\frac{\lambda}{M} \gg \sigma, 
      \end{array}
    \right.
  \end{align}
  where $\sigma$ has been defined in Eq.~(\ref{T00}), and $\sigma^*\in\mathbb{R}$
  is some appropriate value of the order of magnitude of
  $2\sum_{m,n}\sigma_{m,n}/(M(M-1)|u_n||u_m|)$ (which is real).  As in the uniform case, we do not need
  to know $\sigma^*$, nor to assume Eq.~(\ref{T7g}) as an actual
  equality. Furthermore, from Eqs. (\ref{T00}), (\ref{T01}), and (\ref{umax1}), there 
  exists a regime where $\lambda(M)s^*(M)/M$, $\sigma$ and $\sigma^*$
  tend all to zero.  

  So far we have considered, for simplicity, $\mathcal{E}_1=0$.  The
  generalization to the case $\mathcal{E}_1 \neq 0$ with
  $\varlimsup_{M\to \infty} \mathcal{E}_1/M=0$ is immediate. We
  replace $\lambda$ with $\lambda+\mathcal{E}_1$ in Eq.~(\ref{T2}) or
  in Eq.~(\ref{T2bg}).  The result of Eq.~(\ref{T1}) still holds since
  $\varlimsup_{M\to\infty} \mathcal{E}_1/M=0$.

  \textit{Degenerate case.}  Let us consider a two-fold degenerate
  case. We introduce
  \begin{align}
    \label{T2gg}
    \bm{\mathcal{F}}(\lambda^{(1)},\lambda^{(2)}) =& \bm{\mathcal{H}}
    +\left(\lambda^{(1)}+\mathcal{E}_1\right)
    |\mathcal{E}^{(1)}_1\rangle\langle\mathcal{E}^{(1)}_1| \nonumber
    \\ &+ \left(\lambda^{(2)}+\mathcal{E}_1\right)
    |\mathcal{E}^{(2)}_1\rangle\langle\mathcal{E}^{(2)}_1|.
  \end{align}
  Instead of Eq.~\ref{T5}, we now have to exploit
  \begin{align}
    \label{T5g}
    \mu_2=\lim_{\lambda^{(1)}\to\infty,\lambda^{(2)}\to\infty}
    GSL\left[\bm{\mathcal{F}}(\lambda^{(1)},\lambda^{(2)})\right].
  \end{align}
  By defining
  \begin{align}
    \label{Ugg}
    \bm{\mathcal{U}}=
    M\left(|\mathcal{E}^{(1)}_1\rangle\langle\mathcal{E}^{(1)}_1|
      +|\mathcal{E}^{(2)}_1\rangle\langle\mathcal{E}^{(2)}_1|\right)
    -\bm{\mathcal{D}}^{(1)}-\bm{\mathcal{D}}^{(2)},
  \end{align}
  where $\bm{\mathcal{D}}^{(1)}$ and $\bm{\mathcal{D}}^{(2)}$ have
  matrix elements
  \begin{align}
    \label{Dgg}
    {\mathcal{D}}^{(i)}_{m,n}=\left(u^{(i)}_n\right)^2\delta_{m,n},
    \qquad i=1,2,
  \end{align}
  we proceed as in the non-degenerate case.  The generalization to a
  $k$-fold degeneracy, with $k$ finite and independent of $M$, is
  obvious.
\end{proof}

\section{Weak ergodicity}
From the comments following Eq. (\ref{T7g}), it is evident that, in order for the theorem to hold,
we can ask for a weaker condition on the GS's $|\mathcal{E}_1^{(i)}\rangle$.
The following theorem accounts for such a generalization.

\begin{theorem}
  (Weak ergodicity) Let $\bm{\mathcal{H}}$ be a $M\times M$
  Hermitian matrix with distinct, possibly degenerate,
  eigenvalues $\mathcal{E}_1<\mathcal{E}_2<\mathcal{E}_3<\ldots$.  Let
  us suppose to have $k$ degenerate GSs 
  and there exists a positive function $g(M)$ such that
  \begin{align}
    \label{T00g}
    \sigma=\max_i
    \max_{m,n}\frac{|\sigma_{m,n}|}{|u^{(i)}_nu^{(i)}_m|}<g(M),
  \end{align}
  \begin{align}
    \label{T01g}
    \lim_{M\to\infty} g(M)=0, \quad \lim_{M\to\infty}
    \frac{1}{M^2g(M)}=0.
  \end{align}
  Let each GS of $\bm{\mathcal{H}}$ be weakly ergodic, i.e. for any $i$:
\begin{align}
  \label{LB4w}
  0< \varliminf_{M\to\infty} |u^{(i)}_n| \quad
\mathrm{if~} \exists m: ~\sigma_{m,n}\neq 0 \quad \mathrm{or~} \sigma_{n,m}\neq 0,
\end{align}
and
\begin{align}
  \label{LB4wb}
\varliminf_{M\to\infty}s^*(M)\sqrt{g(M)}=0,
\end{align}
where $s^*$ is the smallest root of the Equation in $\mu$: $\sum_n |u^{(i)}_n|^2/(\mu+|u^{(i)}_n|^2)=1$.  
Then, if $\varlimsup_{M\to\infty} \mathcal{E}_1/M=0$, for the gap
  $\mu_2=\mathcal{E}_2-\mathcal{E}_1$, we have
  \begin{align}
    \label{T1g}
    \varliminf_{M\to\infty} \mu_2 \geq \varlimsup_{M\to\infty}
    \min_{n} \mathcal{V}_n.
  \end{align}
\end{theorem}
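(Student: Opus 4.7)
The plan is to follow the proof of the preceding finite-degenerate theorem essentially verbatim, introducing the same auxiliary operator $\bm{\mathcal{F}}(\lambda^{(1)},\ldots,\lambda^{(k)})$ of Eq.~(\ref{T2gg}), the operator $\bm{\mathcal{U}}$ of Eq.~(\ref{Ugg}), the diagonal matrices $\bm{\mathcal{D}}^{(i)}$ of Eq.~(\ref{Dgg}), and the first Weyl inequality to obtain
\begin{align*}
GSL[\bm{\mathcal{F}}(\lambda)] \geq \min_n\!\left(\mathcal{V}_n + \frac{\lambda}{M}\sum_i |u^{(i)}_n|^2\right) + f\!\left(\frac{\lambda}{M},\bm{\sigma}\right),
\end{align*}
with $f(\lambda/M,\bm{\sigma})=GSL[\bm{\mathcal{K}}+(\lambda/M)\bm{\mathcal{U}}]$ as in Eq.~(\ref{T6g}). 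For $\lambda\geq 0$ the first term is already bounded below by $\min_n \mathcal{V}_n$, regardless of whether some $u^{(i)}_n$ happen to be small, so the weaker ergodicity assumption does no harm to this piece.

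Next I would isolate the two places where the strong ergodicity (\ref{LB4}) was actually used in the earlier proof. The first is the very definition of $\sigma$ in Eq.~(\ref{T00g}): the ratio $|\sigma_{m,n}|/|u^{(i)}_n u^{(i)}_m|$ requires a non-vanishing denominator, and the weak ergodicity condition (\ref{LB4w}) is tailored to guarantee this precisely at the indices that can contribute to the maximum, namely those with $\sigma_{m,n}\neq 0$; adopting the natural convention $0/0=0$ at the irrelevant entries, $\sigma<g(M)$ remains well-posed and $\sigma,\sigma^*\to 0$ as before. The second, and genuinely critical, place is the intensive-regime estimate in Eq.~(\ref{T7g}), where $GSL[\alpha\bm{\mathcal{U}}]\sim -|\alpha|s^*(M)$ for $\alpha>0$: full ergodicity via Eq.~(\ref{umax1}) kept $s^*(M)$ bounded, whereas under weak ergodicity a state localized on the few coupled sites can concentrate the norm and $s^*$ may diverge. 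I would still choose $\lambda(M)/M=\sqrt{g(M)}$ as in Eq.~(\ref{T10}), so that $(\lambda(M)/M)s^*(M)=s^*(M)\sqrt{g(M)}$; hypothesis (\ref{LB4wb}) is precisely $\varliminf_{M\to\infty} s^*(M)\sqrt{g(M)}=0$, producing a subsequence $M_j\to\infty$ along which $(\lambda(M_j)/M_j)s^*(M_j)\to 0$, together with $\sigma,\sigma^*\to 0$ and $\lambda(M_j)\to\infty$ (the latter from $M^2 g(M)\to\infty$). Along this subsequence $f\to 0$, the Weyl bound delivers the claimed asymptotic inequality, and the limits $\lambda^{(i)}\to\infty$ required by Eq.~(\ref{T5g}) and the shift $\mathcal{E}_1\to 0$ enabled by $\varlimsup\mathcal{E}_1/M=0$ are handled exactly as in the previous proof.

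The main obstacle is the subsequential character of the argument: in the earlier theorem the quantities $\sigma$, $\sigma^*$, $\lambda/M$ and $f$ all vanished jointly as $M\to\infty$, while here the vanishing of $f$ is only controlled along the subsequence $M_j$ singled out by (\ref{LB4wb}). One therefore has to verify that this weaker, subsequential control is still enough to deliver the $\varliminf$ bound (\ref{T1g}), and that interchanging the outer limit $\lambda^{(i)}\to\infty$ required by (\ref{T5g}) with the asymptotic analysis in $M$ causes no conflict—essentially because the Weyl inequality is a pointwise estimate, valid for every $M$ and every $\lambda$ large enough to exceed $\mu_2(M)$, so the two limiting procedures decouple.
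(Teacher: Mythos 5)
Your proposal is correct and takes essentially the same route as the paper, which in fact offers no standalone proof of this theorem but simply observes (in the comments following Eq.~(\ref{T7g})) that the degenerate-case argument goes through as long as $\lambda(M)s^*(M)/M$, $\sigma$ and $\sigma^*$ all vanish while $\lambda(M)\to\infty$ --- precisely what you reconstruct with the same $\bm{\mathcal{F}}$, $\bm{\mathcal{U}}$, Weyl inequality, and the choice $\lambda(M)/M=\sqrt{g(M)}$, with hypothesis~(\ref{LB4wb}) supplying the control of $s^*(M)\sqrt{g(M)}$ that strong ergodicity provided before. The subsequential caveat you flag is genuine, but it is inherited from the paper's own use of $\varliminf$ in~(\ref{LB4wb}) rather than a defect of your argument relative to the paper's.
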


\section{Application to random walks with infinitesimal jump rates}
A master equation of the form of Eq.~(\ref{ME}) can be interpreted as
a weighted continuous-time random walk taking place on a graph whose
adjacency matrix $\bm{\mathcal{A}}$ is defined by
$\mathcal{A}_{m,n}=(1-\delta_{m,n})\theta[-\mathcal{L}_{m,n}]$, where
$\theta[\cdot]$ is the Heaviside function.  The evolution of the
probability $p_n(t)$ goes through random jumps characterized by the
jump rates
\begin{align}
  \label{WW}
  W(n\to m)=-\mathcal{L}_{m,n}(1-\delta_{m,n}).
\end{align}
In the case of unweighted random walks, the non zero off-diagonal
elements of $\bm{\mathcal{L}}$ are uniform, whereas the diagonal
elements $\mathcal{L}_{n,n}$ coincide with the degree $k(n)$ of the
configuration with label $n$, namely,
\begin{align}
  \label{degree}
  \mathcal{L}_{n,n}=k(n)=\sum_{m\neq n} W(n\to m) =-\sum_{m\neq
    n}\mathcal{L}_{m,n}.
\end{align}
Furthermore, if the random walk is symmetric, i.e. the induced graph
is indirect, we have $\mathcal{L}_{m,n}=\mathcal{L}_{n,m}$.  In such
a case, we can directly identify $\bm{\mathcal{L}}$ as the Hamiltonian
$\bm{\mathcal{H}}$.  The GS of $\bm{\mathcal{H}}=\bm{\mathcal{L}}$ has
zero energy, $\mathcal{E}_1=0$, and is uniform, $\langle
n|\mathcal{E}_1\rangle=1/\sqrt{M}$.

Consider now a symmetric random walk in which the jump rates are
infinitesimal with $M$. We assume, for instance,
\begin{align}
  |\mathcal{L}_{m,n}|\leq \frac{1}{\left[\log(M)\right]^\alpha},
  \qquad m\neq n, \qquad 0<\alpha<1.
\end{align}
Taking into account that the GS is ergodic, we see that for this model
the hypotheses of the theorem are satisfied, and we can conclude that
(we keep using the symbol $k(n)$ as a weighted degree to include also
symmetric weighted random-walks because, although these have non-uniform jump-rates, their GS is still the uniform one)
\begin{align}
  \label{RW}
  \varliminf_{M\to\infty}\mu_2 \geq \varlimsup_{M\to\infty} \min_n
  k(n).
\end{align}
Equation~(\ref{RW}) tells us that, in symmetric random walks having
jump rates decaying with the logarithm of the system size $M$, the
relaxation time $\tau$ to reach equilibrium is upper bounded by the
configuration having minimal degree, namely $\tau\leq 1/ \min_n k(n)$.
This result is quite intuitive: a necessary condition to observe a
fast dynamics is that the configurations (or nodes) have not too small
degrees.  However, we warn the reader that, in the usual definition of
random walk, the jump rates are either $W(n\to m)=0$ or $W(n\to
m)=\mathop{O}(1)$, so that the theorem cannot be directly applied, its
hypotheses being not satisfied.  Nevertheless, if $\min_n k(n)$ is a
function of $M$ which diverges as $1/g(M)$, with
$1/(M^2g(M))\xrightarrow{M\to\infty} 0$, we can apply the theorem to
the matrix $\bm{\mathcal{H}}=\bm{\mathcal{L}}/(\min_n k(n))$, and we
conclude that
\begin{align}
  \label{RW1}
  \varliminf_{M\to\infty}\frac{\mu_2}{\min_n k(n)} \geq 1.
\end{align}

\section{Conclusions}
We have stated and proved a lower bound, in the limit $M\to\infty$,
for the gap of Hermitian $M\times M$ matrices characterized by (i)
ergodic GS, simple or degenerate, and (ii) off-diagonal terms which
are infinitesimal in $M$.  The ``infinitesimal'' conditions
(\ref{T00}-\ref{T01}) under which the theorem is satisfied are quite
mild, and, in particular, cover the common cases in which the
off-diagonal terms decay as the logarithm of the system size $M$. This
includes random-walks models with infinitesimal jump rates,
thermalization of classical models characterized by infinitesimal
couplings arbitrarily distributed, and the Pauli master equation
characterizing the thermalization of open quantum systems
\cite{Petruccione, Weiss}.

The key ingredient of the proof of the present theorem is
Eq.~(\ref{U2}), or its generalization Eq.~(\ref{U2g}).  From a physics
viewpoint, these equations describe a sort of phase transition between
a ``bosonic'' extensive phase, and a ``non-bosonic'' intensive
phase. Here, the terms ``bosonic'' and ``non-bosonic'' come from the
observation of the signs of the off-diagonal matrix elements of the
operator $\alpha\bm{\mathcal{U}}$, which, in turn, determine those of
the ``kinetic'' operator $\bm{\mathcal{K}}+\bm{\mathcal{U}}\lambda/M$.
Essentially, the proof of the theorem consists in finding a scaling
$\lambda(M)$ such that the system stays in the non-bosonic intensive
phase, and, at the same time, the parameters characterizing
$\bm{\mathcal{K}}+\bm{\mathcal{U}}\lambda/M$ tend to zero.

To compare our theorem with previously known results, a few comments
are in order.  Consider a symmetric and unweighted Laplacian matrix
$\bm{\mathcal{L}}$, with $\mathcal{L}_{n,n}=k(M)$, where $k(M)$ is a
suitable growing function of $M$.  We can apply the theorem to the
matrix $\bm{\mathcal{H}}=\bm{\mathcal{L}}/k$ obtaining
$\varliminf_{M\to\infty}\mu_2\geq 1$. On the other hand, the graph
induced by $\bm{\mathcal{L}}$ is a regular graph of degree $k$, and
classical results on spectral graph theory applied to
$\bm{\mathcal{H}}$ provide the more accurate estimate $\mu_2\simeq
1-2/(\sqrt{k-1})$ \cite{BrouwerHaemers}.  This example shows that for
matrices associated to graphs satisfying special properties, our lower
bound can be somehow not competitive.  However, our result may become
crucial whenever, besides the above conditions (i) and (ii), there are
no other assumptions or information about $\bm{\mathcal{H}}$.  Notice,
in particular, that $\bm{\mathcal{H}}$ can be quite different from a
Laplacian, its matrix elements being weighted and with no definite
sign or phase.  We are not aware of other lower bounds for such a general case.
In Ref.~\cite{AC2005} general weighted non symmetric Laplacians
$\bm{\mathcal{L}}$ are considered and the eigenvalues of the
standardized Laplacian $\bm{\mathcal{L}}/M$ are bounded in a region of
the complex plane which contains the real segment $[0,1]$.  If we
assume a Laplacian with real eigenvalues and take
$\mathcal{L}_{n,n}=\alpha_n M$, with $0\leq \alpha_n\leq 1$, and
$\mathcal{L}_{m,n}=-\beta_{m,n}$, with $0\leq \beta_{m,n}\leq 1$,
$m\neq n$, for the standardized Laplacian our theorem provides
$\varliminf_{M\to\infty}\mu_2\geq \varlimsup_{M\to\infty} \min_{k}
\alpha_k$, asymptotically localizing, as a function of the values of
the set $\{\alpha_k\}$, the second eigenvalue in the same segment
$[0,1]$ of Ref.~\cite{AC2005}.  However, we stress that a standardized
Laplacian with its off-diagonal elements vanishing as $1/M$ is not of
great interest for physics, whereas our theorem covers the widespread
cases in which the off-diagonal terms decay as $1/\log M$, i.e.
linearly with the inverse of the system size $N$ where, typically, $N$
is the number of particles or the physical volume.

\begin{acknowledgments}
M. O. acknowledges CNPq Grant PDS 150934/2013-0 and Sapienza University of Rome grant \textit{Professori Visitatori 2014} prot. C26V14LS9J.
\end{acknowledgments}

\end{document}